\newcounter{block}
\newtheorem{theorem}{Theorem}
\newtheorem{lemma}{Lemma}
\theoremstyle{definition}
\newtheorem{definition}[block]{Definition}
\newcommand{\rulec}[1]{\text{\footnotesize{(#1)}}}
\newcommand{\ie}{\emph{i.e.}}
\newcommand{\eg}{\emph{e.g.}}
\newcommand{\interp}[1]{\llbracket{#1}\rrbracket}
\newcommand{\interpA}[1]{\llparenthesis{\,#1\,}\rrparenthesis}
\newcommand{\synLet}[3]{\mathbf{let} \, #2 \,\leftarrow #1\, \mathbf{in} \, #3}
\newcommand{\letb}{\textbf{let}}
\newcommand{\unit}{\textbf{unit}}
\newcommand{\nat}{\mathbf{nat}}
\newcommand{\getS}{\mathsf{get}}
\newcommand{\putS}{\mathsf{put}}
\newcommand{\stopS}{\mathsf{stop}}
\newcommand{\getF}{\mathbf{G}}
\newcommand{\putF}{\mathbf{P}}
\newcommand{\getO}{\mathsf{get}}
\newcommand{\putO}{\mathsf{put}}
\newcommand{\sucO}{\mathsf{suc}}
\newcommand{\unitO}{\mathsf{unit}}
\newcommand{\effChan}{\mathit{eff}}
\newcommand{\ei}{\mathsf{ei}}
\newcommand{\eo}{\mathsf{eo}}
\newcommand{\ea}{\mathsf{ea}}
\newcommand{\eb}{\mathsf{eb}}
\newcommand{\du}[1]{\overline{#1}}
\newcommand{\eff}[1]{\,#1} 
\newcommand{\trule}[1]{\text{{\small{({#1})}}}\;}
\newif\ifwebpage
\newif\ifappendix
\author{Dominic Orchard 
\institute{Imperial College London, UK}
\and Nobuko Yoshida
\institute{Imperial College London, UK}}
\title{Using session types as an effect system}
\begin{document}
\maketitle

\begin{abstract}
  Side effects are a core part of practical programming. However, they
  are often hard to reason about, particularly in a concurrent
  setting. We propose a foundation for reasoning about concurrent side
  effects using \emph{sessions}.  Primarily, we show that 
  \emph{session types} are expressive
  enough to encode an \emph{effect system} for stateful processes. This
  is formalised via an effect-preserving encoding of a 
  simple imperative language with an effect system into the $\pi$-calculus with
  session primitives and session types (into which we encode effect specifications). 
 This result goes towards showing a connection between the expressivity
of session types and effect systems. We briefly discuss how the encoding
could be extended and applied
 to reason about and control concurrent side effects.
\end{abstract}

\let\thefootnote\relax
\footnotetext{
Update: since PLACES'15, this 
 work has been greatly expanded upon. Much of the further
work suggested in this paper is covered in our later paper 
\emph{Effects as Sessions, Sessions as Effects}
(Orchard, Yoshida) appearing in the proceedings of POPL'16.}

\ifwebpage{
\footnotetext{Author's version. To appear in the pre-proceedings of PLACES 2015.}
} 
\fi

\section{Introduction}

\emph{Side effects} such as input-output and mutation of memory are  
important features of practical programming. However, effects are
often difficult to reason about due to their implicit impact.
Reasoning about effects is even more difficult in a concurrent
setting, where interference may cause unintended non-determinism. 
For example, consider a parallel program:
$\textbf{put} \, x \, ((\textbf{get} \, x) + 2) \mid \textbf{put} \, x \, ((\textbf{get} \, x) + 1)$
 where $x$ is a mutable memory cell.
Given an initial assignment $x \mapsto 0$, the final value stored at
$x$ may be any of 3, 2, or 1 since calls to $\textbf{get}$ and $\textbf{put}$ may be interleaved. 

Many approaches to reasoning, specifying, and controlling the scope of
effects have therefore been proposed.  Seemingly orthogonally, various
approaches for specifying and reasoning about concurrent interactions
have also been developed. In this paper, we show that two particular
approaches for reasoning about effects and concurrency are in
fact \emph{non}-orthogonal; one can be embedded into the other.  We
show that \emph{session types}~\cite{YoshidaV07} for concurrent
processes are expressive enough to encode \emph{effect
  systems}~\cite{gifford1986effects,TalpinJP:typeffd} for state. 
We formalise this ability by embedding/encoding a simple imperative 
language with an effect system into the
$\pi$-calculus with session types: 
sessions simulate state and session types become effect annotations. 
Formally, our embedding maps type-and-effect judgements to session type judgements: 
\begin{equation}
\Gamma \vdash M : \tau , F \quad
\xrightarrow{\textit{embedding}} \quad
\interp{\Gamma} ; \, \mathit{res} : \, !\interp{\tau}.\textbf{end}, \effChan{} : \interp{F}
 \vdash \interp{M}
\end{equation}
That is, an expression $M$ of type $\tau$ in context $\Gamma$ performing effects $F$,  
is mapped to a process $\interp{M}$
 which sends its result over session channel $\mathit{res}$ and 
simulates effects by interactions $\interp{F}$ (defined by 
an interpretation of the effect annotation) over 
session channel $\textit{eff}$.

We start with the traditional encoding of a mutable store into the $\pi$-calculus (Section~\ref{sec:manual}) and show how its session types provide a kind of effect system. Section~\ref{sec:manual} introduces a simple imperative language, which we embed into the $\pi$-calculus with sessions (sometimes called the \emph{session calculus}) (Section~\ref{sec:embedding}).  The embedding is shown sound with respect to an equational theory for the imperative language.  Section~\ref{sec:reasoning} discusses how to extend the encoding to parallel composition in the imperative language and how the effect
information can be used to safely introduce implicit parallelism in the encoding. 

\noindent
Our embedding has been partly formalised in Agda 
and is available at {\small{\url{https://github.com/dorchard/effects-as-sessions}}}
 (Appendix~\ref{sec:agda} gives a brief description). This is used to verify the syntactic
soundness of the embedding (essentially, that types and effects are correctly translated and preserved). 

The main result of this paper is foundational and technical, about the expressive power of the $\pi$-calculus with session primitives and session types. This result has a number of possible uses:
\begin{itemize}
\item \emph{Effects systems for the $\pi$-calculus}: rather than adding an additional effect
system on top of the $\pi$-calculus, we show that existing work on session types can be reused for this purpose. 
\item \emph{Semantics of concurrency and effects}: our approach
provides an intermediate language for the semantics of effects in
a concurrent setting.
%
\item \emph{Compilation}: Related to the above, the session calculus can be used as a typed
 intermediate language for compilation, where our embedding provides 
 the translation. Section~\ref{sec:reasoning} demonstrates an optimisation 
step where safe implicit parallelism is introduced  
 based on effect information and soundness results of our embedding. 
\end{itemize}
Effect systems have been used before to reason about effects
in concurrent programs. For example,
 Deterministic Parallel Java uses an  
 effect system to check that parallel processes can
safely commute without memory races, and otherwise schedules processes
to ensure determinism~\cite{bocchino2009type}. Our approach 
allows state effects to be incorporated directly into 
concurrent protocol descriptions, reusing session types, without requiring interaction 
between two distinct systems.

\section{Simulating state with sessions}
\label{sec:manual}

\paragraph{2.1 $\,$ State via processes}

A well-known way to implement state in a process algebra is to
represent a mutable store as a server-like process (often called a
\emph{variable agent}) that offers two modes of interaction
($\mathsf{get}$ and $\mathsf{put}$).  In the
$\mathsf{get}$ mode, the agent waits to receive a value on its channel
which is then ``stored''; in the $\mathsf{put}$ it sends the stored
value. This can be implemented in the $\pi$-calculus with branching
and recursive definitions as follows (Figure~\ref{fig:pi-calc-syntax}  
describes the syntax; the calculus is based on the second system in \cite{YoshidaV07} using 
the dual channels from \cite{MostrousY15} instead of a polarity-based approach):
\begin{align}
\hspace{-0.65em}
\textbf{def} \;\, \textit{Store}(x, c) = c \rhd \{ \getS : c!\langle{x}\rangle.\textit{Store}\langle{x, c}\rangle, \; \putS : c?(y).\textit{Store}\langle{y , c}\rangle, \; \mathsf{stop} : \mathbf{0} \} \;\, \textbf{in} \;\, \textit{Store}\langle{i, \effChan{}}\rangle 
\label{def:store}
\end{align}
where \textit{Store} is parameterised by the stored value $x$ and a 
session channel $c$. 
That is, \emph{Store} provides a choice (by $\rhd$) over channel $c$ between three 
behaviours labelled $\getS$, $\putS$, and $\mathsf{stop}$. The $\getS$ branch sends the
state $x$ on $c$ and then recurses with the same parameters, preserving
the stored value.  The $\putS$ branch receives $y$ which then becomes the
new state by continuing with recursive call $\textit{Store}\langle{y, c}\rangle$. 
The $\mathsf{stop}$ branch provides finite interaction by
terminating the agent. 
The store agent is initialised with a value $i$ and channel $\effChan{}$.

The following parameterised operations \textit{get} and \textit{put} 
then provide interaction with the store:  
\begin{align}
\mathit{get}(c)(x).P = \overline{c} \lhd \getS \, . \, \overline{c} ?(x).P  
\qquad\quad
\mathit{put}(c)\langle{V}\rangle.P = \overline{c} \lhd \putS \, . \, \overline{c}!\langle{V}\rangle.P 
\label{def:get-put}
\end{align}
where $\overline{c}$ is the opposite endpoint of a channel, 
and \textit{get} selects (by the $\lhd$ operator) the $\getS$ branch
then receives a value which is bound to $x$ in the scope of $P$. The
\textit{put} operation selects its relevant branch then sends a value $V$ before 
continuing as $P$.

A process can then use \textit{get} and \textit{put} for stateful computation
by parallel composition with \textit{Store}, \eg{} $\textit{get}(\effChan)(
x).\textit{put}(\effChan)\langle{x+1}\rangle.\textbf{end} \; | \; \textit{Store}\langle{i, \effChan}\rangle$ increments the initial value. 

\begin{figure}[t]
\begin{minipage}{1\linewidth}
\begin{align*}
\begin{array}{c}
\hspace{-1.1em}
(\textit{value variables}) \quad v ::= x, y, z 
\quad\quad
(\textit{session channel variables}) \quad c, d, \overline{c}, \overline{d} \\[0.2em]
\begin{array}{rrlll}
(\textit{values}) & V ::=   
& \!\! C \mid v         & & \textit{constants / variables} \\[0.2em]
(\textit{processes}) & P,Q  ::= 
 & \!\!  c ? (x) . P & \hspace{-0.5em} \mid \;  c ! \langle{V}\rangle.P \hspace{1.5em} & \textit{receive / send} \\
& \mid & \!\!  c ? (d).P & \hspace{-0.5em} \mid \; c ! \langle{d}\rangle.P & \textit{channel receive / send} \\
& \mid  & \!\!  c \rhd \{\tilde{l} : \tilde{P}\} & \hspace{-0.5em} \mid \; c \lhd l . P & \textit{branching / selection} \\
& \mid  & \!\!\! \textbf{def} \; X(\tilde{x},\tilde{c}) = P \; \textbf{in} \; Q & 
\hspace{-0.5em} \mid \; X\langle{\tilde{V}, \tilde{c}}\rangle & \textit{recursive definition / use} \\
& \mid & \!\!\! \nu c . P             & & \textit{channel restriction} \\
& \mid & \!\!\! (P \, | \, Q)              & & \textit{parallel composition} \\
& \mid  & \!\!\! \mathbf{0}            & & \textit{nil process} \\[0.2em]
\end{array} \\[0.2em]
(\textit{value-types}) \quad  \tau ::= \textbf{unit} \mid \textbf{nat} \mid S  
\qquad
(\textit{contexts}) \quad \Gamma ::= \emptyset \mid \Gamma, x : \tau \mid \Gamma, X : (\tilde{\tau}, \tilde{S}) 
\\[-0.5em]
\end{array}
\end{align*}
($l$ ranges over labels, $\tilde{l} : \tilde{P}$ over sequences of label-process pairs,
and $\tilde{e}$ over syntax sequences)
\end{minipage}
\caption{Syntax of $\pi$-calculus with recursion and sessions}
\label{fig:pi-calc-syntax}
\end{figure}
 
\paragraph{2.2 $\,$ Session types}

Session types provide descriptions (and restrictions) of 
the interactions that take place over channels~\cite{YoshidaV07}.
%
Session types record sequences of typed \emph{send} ($![\tau]$) and \emph{receive} ($?[\tau]$) interactions, terminated by the \textbf{end} marker, branched by \emph{select} ($\oplus$) and \emph{choice} ($\&$) interactions, with cycles provided by a fixed point $\mu \alpha$ and session variables $\alpha$: 
\begin{align*}
S,T ::= & \; ![\tau] . S \mid \; ?[\tau] . S \mid \oplus [l_1 : S_1, \ldots, l_n : S_n] \mid \&[l_1 : S_1, \ldots, l_n : S_n] \mid \mu \alpha . S \mid \alpha \mid \mathbf{end}
\end{align*}
\noindent
where $\tau$ ranges over value types $\textbf{nat}, \textbf{unit}$ and session channels $S$,
and $l$ ranges over labels. 

Figure~\ref{fig:pi-calc-typing} (p.~\pageref{fig:pi-calc-typing}) gives the rules of the session typing system (based 
on that in~\cite{YoshidaV07}). Session typing judgements for processes have the form $\Gamma; \Delta \vdash P $ meaning a process $P$ has value variables $\Gamma = x_1 : \tau_1 \ldots x_n : \tau_n$ and session-typed channels $\Delta = c_1 : S_1, \ldots c_n : S_n$. 

For a session $S$, its \emph{dual}
$\overline{S}$ is defined in the usual way~\cite{YoshidaV07}:
\begin{align*}
\begin{array}{c}
\du{![\tau].S} = ?[\tau].\du{S} \quad
\du{?[\tau].S} = ![\tau].\du{S} \quad
\du{\mu \alpha . S} = \mu \alpha . \du{S} \quad
\du{\alpha} = \alpha \quad
\du{\mathbf{end}} = \mathbf{end}
\\
\du{\oplus[l_1 : S_1, \ldots, l_n : S_n]} = \&[l_1 : \du{S_1}, \ldots, l_n : \du{S_n}] \quad
\du{\&[l_1 : S_1, \ldots, l_n : S_n]} = \oplus[l_1 : \du{S_1}, \ldots, l_n : \du{S_n}]
\end{array}
\end{align*}
For some state type $\tau$ and initial value $i : \tau$, 
the \textit{Store} process \eqref{def:store} has session judgement\ifappendix 
 (for which the derivation is shown in Appendix~\ref{sec:store-types}):
\else
:
\fi
\begin{align*}
\Gamma; \effChan{} : \mu \alpha . \, \& [\getS{} : \, ! [\tau] . \alpha, \putS{} : ? [\tau] . \alpha, \stopS{} : \mathbf{end}] 
\vdash \mathit{Store}\langle{i, \effChan{}}\rangle
\end{align*}
That is, $\effChan{}$ is a channel over which there is an sequence of offered
choice between the $\getS{}$ branch, which sends a value, $\putS$ branch
which receives a value, and is terminated by the $\stopS{}$ branch. 

The session judgements for \textit{get} and \textit{put} \eqref{def:get-put} are then:
\begin{align}
\dfrac{\Gamma, x : \tau; \, \Delta, \overline{\effChan} : S \vdash P}
{\Gamma; \Delta , \overline{\effChan} : \oplus [\getS : \, ? [\tau] . S]
\vdash \mathit{get}(\effChan)(x).P}
\quad\;\,
&
\quad\;\,
\dfrac{\Gamma; \emptyset \vdash V : \tau \quad \Gamma; \Delta, \overline{\effChan} : S \vdash P}
{\Gamma; \Delta , \overline{\effChan} : \oplus [\putS : \, ! [\tau] . S]
\vdash \mathit{put}(\effChan)\langle{V}\rangle.P}
\label{eq:get-put}
\end{align}
We use a variant of session typing where selection terms $\rhd$, used by
$\mathit{get}$ and $\mathit{put}$, have a selection session type $\oplus$ 
with only the selected label (seen above), and not the full range
of labels offered by its dual branching process, which would be 
$\oplus[\getS : ?[\tau].S, \putS : ![\tau].S, \stopS : \mathbf{end}]$ for both. Duality of
select and branch types is achieved by using session subtyping
to extend select types with extra labels~\cite{CDY2014}
 (see Appendix~\ref{subtyping-discussion} for details). 

Throughout we used the usual convention of eliding a trailing
$\mathbf{0}$, \eg{}, writing $r!\langle{x}\rangle$ instead of
$r!\langle{x}\rangle.\mathbf{0}$, and likewise for session types,
\eg{}, $![\tau]$ instead of $![\tau].\mathbf{end}$. 

\begin{figure}[t]
\framebox{
\begin{minipage}{0.94\linewidth}
\begin{align*}
\begin{array}{c}
\framebox{$\Gamma; \Delta \vdash V : \tau$} 
\;\;
(\textit{value typing}) \;\;
\;
\trule{const}
\dfrac{C : C_\tau}{\Gamma; \emptyset \vdash C : C_\tau} 
\;\,
\trule{var}
\dfrac{v : \tau \in \Gamma}{\Gamma; \emptyset \vdash v : \tau}
\;\,
\trule{suc}
\dfrac{\Gamma; \emptyset \vdash V : \textbf{nat}}{\Gamma; \emptyset \vdash \mathbf{suc} \, V : \textbf{nat}}
\!\!\!\!
\\[1.8em]
\hspace{-2em}
\framebox{$\Gamma; \Delta \vdash P$} 
\quad 
(\textit{process typing}) \\[1.2em]
\trule{end}\;
{\Gamma; \tilde{c} : \textbf{end} \vdash \textbf{0}}
\quad
\trule{par}
\dfrac{\Gamma; \Delta_1 \vdash P \quad \Gamma; \Delta_2 \vdash Q} 
      {\Gamma; \Delta_1,\Delta_2 \vdash P \mid Q} 
\quad
\trule{restrict}
\dfrac{\Gamma; \Delta, c : S, \overline{c} : \overline{S} \vdash P}
      {\Gamma; \Delta \vdash \nu c . P}
\\[1.5em]
\begin{array}{cc}
\trule{def}
\inference{\Gamma, X : (\tilde{\tau}, \tilde{S}), \tilde{x} : \tilde{\tau} ; \; \tilde{c} : \tilde{S} \vdash P \\ \Gamma, X : (\tilde{\tau}, \tilde{S}); \Delta \vdash Q} 
      {\Gamma; \Delta \vdash \textbf{def} \; X(\tilde{x}, \tilde{c}) = P \; \textbf{in} \; Q}
&
\trule{dvar}
\dfrac
      {\Gamma; \emptyset \vdash \tilde{V} : \tilde{\tau}}
      {\Gamma, X : (\tilde{\tau}, \tilde{S}); \tilde{c} : \tilde{S}, \tilde{d} : \textbf{end} \vdash X\langle{\tilde{V}, \tilde{c}}\rangle}
\end{array}\\[2.2em]
\begin{array}{cc}
\trule{chan-recv}
 \dfrac{\Gamma; \Delta, c : T, d : S \vdash P}{\Gamma; \Delta, c : ?[S].T \vdash c?(d).P}
&
\trule{chan-send}
 \dfrac{\Gamma; \Delta, c : T \vdash P}{\Gamma; \Delta, c : \, ![S].T, d : S \vdash c!\langle{d}\rangle.P}
\\[2em]
\end{array} \\
\begin{array}{cc}
\trule{recv}
\dfrac{\Gamma, x : \tau; \Delta, c : S \vdash P}
      {\Gamma; \Delta, c : ?[\tau]. S  \vdash c?(x).P}
&
\trule{send}
\dfrac{\Gamma; \emptyset \vdash V : \tau \quad \Gamma ; \Delta , c : S \vdash P}
      {\Gamma; \Delta, c : \, ![\tau]. S  \vdash c!\langle{V}\rangle.P}
\end{array}\\[2em]
\begin{array}{cc}
\trule{branch}
\dfrac{\Gamma; \Delta, c : S_i \vdash P_i}
      {\Gamma; \Delta, c : \& [\tilde{l} : \tilde{S}] \vdash c \rhd \{\tilde{l} : \tilde{P}\}}&
\trule{select}
\dfrac{\Gamma; \Delta, c : S\ \vdash P}
      {\Gamma; \Delta, c : \oplus [l : S] \vdash c \lhd l . P}
\end{array}
\end{array}
\end{align*}
where $\tilde{x} : \tilde{\tau}$ is shorthand for a sequence of variable-type pairs, 
and similarly $\tilde{c} : \tilde{S}$ for channels, $\tilde{l} : \tilde{S}$ for labels and sessions, and $\tilde{V}$ for a sequence of values.
\end{minipage}
}
\caption{Session typing relation over the $\pi$-calculus 
with recursion and sessions~\cite{YoshidaV07}.} 
\label{fig:pi-calc-typing}
\end{figure} 

\paragraph{2.3 $\,$ Effect systems}

Effect systems are a class of static analyses for 
effects, such as state or exceptions~\cite{gifford1986effects,nielson1999type,TalpinJP:typeffd}. 
Traditionally, effect systems are described as syntax-directed analyses 
by augmenting typing rules with effect judgements, \ie{},
$\Gamma \vdash M : \tau, F$ where $F$ describes the effects of $M$ --
usually a set of effect tokens (but often generalised, \eg{}, in~\cite{nielson1999type}, to arbitrary semi-lattices, monoids, or semirings). 

We define the \emph{effect calculus}, a simple imperative language with effectful operations and a type-and-effect system defined in terms of an abstract monoidal effect
algebra. 
%
Terms comprise variables, \letb-binding, operations, and 
constants, and types comprise value types for natural numbers and unit: 
\begin{align*}
M, N ::= x \mid \synLet{M}{x}{N} \mid \mathit{op} \, M \mid c 
\qquad \qquad
\tau, \sigma ::= \unit \mid \nat
\end{align*}
where $x$ ranges over variables, $\textit{op}$ over unary operations, and
$c$ over constants. We do not include 
function types as there is no abstraction (higher-order calculi are discussed
in Section~\ref{sec:higher-order-discussion}). 
Constants and operations can be effectful and are instantiated to provide 
application-specific effectful operations in the calculus. As defaults, 
we include zero and unit constants $0, \unitO \in c$ and a pure successor operation
for natural numbers $\sucO \in \textit{op}$. 

\begin{definition}[Effect system]
  Let $\mathcal{F}$ be a set of effect annotations with a 
  monoid structure $(\mathcal{F}, \bullet, I)$ where $\bullet$ combines effects
  (corresponding to sequential composition) and $I$ is the trivial effect 
  (for pure computation). Throughout $F, G, H$
  will range over effect annotations.  
%
%

Figure~\ref{fig:effect-calculus} defines the
type-and-effect relation. The (var) 
rule marks variable use as pure (with $I$). In
(let), the left-to-right evaluation order of \letb{}-binding is exposed by
the composition order of the effect $F$ of the bound term $M$ followed
by effect $G$ of the \letb{}-body $N$. The (const) rule introduces a constant of type $C_\tau$  
with effects $C_F$, and (op) applies an operation to its pure argument of type $\textit{Op}_\sigma$, returning a result of type $\textit{Op}_\tau$ with effect $\textit{Op}_F$. 

\begin{figure}[t]
\begin{center}
{{ 
\begin{minipage}{1\linewidth}
\begin{align*}
\hspace{-0.3em}
\begin{array}{c}
\rulec{var}\dfrac{x : \tau \in \Gamma}{\Gamma \vdash x : \tau , I}
\;\;\;\;
\rulec{let}\dfrac{\Gamma \vdash M : \sigma, F \quad
                  \Gamma, x : \sigma \vdash N : \tau, G}
                 {\Gamma \vdash \synLet{M}{x}{N} : \tau, F \bullet G}
\;\;\;\;
\rulec{const}\dfrac{}{\Gamma \vdash c : C_\tau , C_F} 
\;\;\;\;
\rulec{op}\dfrac{\Gamma \vdash M : \textit{Op}_\sigma, I}{\Gamma \vdash \textit{op} \; M : \textit{Op}_\tau , \textit{Op}_F}
\end{array}
\end{align*}
\end{minipage}
}}
\caption{Type-and-effect system for the effect calculus}
\label{fig:effect-calculus}
\vspace{-0.7em}
\end{center}
\end{figure}
\end{definition}

\paragraph{2.4 $\,$ State effects}
The effect calculus can be instantiated with different notions of effect. 
For state, we use the effect monoid $(\mathsf{List} \, \{\getF \,
\tau, \putF \, \tau\}, ++, [\,])$ of lists of effect tokens, where
$\getF$ and $\putF$ represent \emph{get} and \emph{put} effects
parameterised by a type $\tau$, $++$ concatenates lists and $[\,]$ is the empty list.
Many early effect systems annotated terms with sets of effects. 
Here we use lists to give a more precise account of state which includes the 
order in which effects occur. This is often described as a \emph{causal} effect
system. 

Terms are extended with 
constant $\getO$ and unary operation $\putO$ where
$\emptyset \vdash \getO : \tau , [\getF \, \tau]$
and
$\Gamma \vdash \putO \; M : \unit , [\putF \, \tau]$ for $\Gamma \vdash M : \tau, I$. 
For example, the following is a valid judgement: 
\begin{equation}
\emptyset \vdash 
\synLet{\getO\,}{x}{\,\putO \, (\sucO \; x)} \, : \nat , [\getF \, \nat, \putF \, \nat]
\label{eq:eff-sample}
\end{equation}
Type safety of the store is enforced by requiring that any \textit{get} effects must
have the same type as their nearest preceding \textit{put} effect. We implicitly apply this condition throughout. 

\paragraph{2.5 $\,$ Sessions as effects}
%
The session types of processes 
interacting with \textit{Store} provide the same information as the state effect system. 
Indeed, we can define a bijection between state effect annotations  
and the session types of $\textit{get}$ and $\textit{put}$ \eqref{eq:get-put}: 
\begin{align}
\interp{[\,]} = \textbf{end} 
\qquad
\interp{(\getF \, \tau) :: F} = \oplus [\getS : ?[\tau] . \interp{F}] 
\qquad
\interp{(\putF \, \tau) :: F} = \oplus [\putS : \, ![\tau] . \interp{F}]
\label{def:effect-translation}
\end{align}
where $::$ is the \emph{cons} operator for lists. 
Thus processes interacting with \textit{Store} have 
 session types corresponding to effect annotations. For example, the following
has the same state semantics as \eqref{eq:eff-sample} and isomorphic session types: 
\begin{equation}
\emptyset; \overline{\effChan} : \interp{[\getF \, \nat,  \putF \, \nat]} \vdash \textit{get}(\effChan{})(x).\textit{put}(\effChan{})\langle{\mathbf{suc} \, x}\rangle
\end{equation}

\section{Embedding the effect calculus into the $\pi$-calculus}
\label{sec:embedding}

Our embedding is based on the embedding of the call-by-value
$\lambda$-calculus (without effects) into the $\pi$-calculus~\cite{MilnerR:funp,SangiorgiD:picatomp}
 taking $\synLet{M}{x}{N} =
(\lambda x . N) \, M$. Since effect calculus terms return a result
and $\pi$-calculus processes do not, the embedding is parameterised by a \emph{result
  channel} $r$ over which the return value is sent, written $\interp{-}_r$. 
Variables and pure \letb-binding are embedded: 
\begin{align}
\interp{\synLet{M}{x}{N}}_r = \; \nu q . \;\, (\interp{M}_q \mid \overline{q}?(x). \interp{N}_r)\hspace{4em}
\interp{x}_r                 = \; r!\langle{x}\rangle
\label{eq:pure-let} 
\end{align}
Variables are simply sent over the result channel.  
For \letb, an intermediate channel $q$ is created over which the result of 
the bound term $M$ is sent by the left-hand parallel process $\interp{M}_q$ and received and bound to $x$ by the right-hand process before continuing with $\interp{N}_r$. This enforces a left-to-right, CBV evaluation order (despite the parallel composition). 

Pure constants and unary operations can be embedded 
similarly to variables and \letb{} given suitable value operations in the
 $\pi$-calculus. For example, successor and zero are embedded as:
\begin{align}
\interp{\sucO \, M}_r = \; \nu q . \;\, (\interp{M}_q \mid \, \overline{q}?(x) . r!\langle{\textbf{suc} \; x}\rangle)
\hspace{4em}
\interp{\mathsf{zero}}_r = \, r!\langle{\mathbf{zero}}\rangle
\label{eq:pure-op}
\end{align}
Given a mapping $\interp{-}$ from effect calculus types to corresponding value types in
the $\pi$-calculus, the above embedding of terms \eqref{eq:pure-let},\eqref{eq:pure-op} 
can be extended to typing judgements as follows (where $\interp{\Gamma}$ interprets the type of each free-variable assumption pointwise, preserving the structure of $\Gamma$):
\begin{align}
\interp{\Gamma \vdash M : \tau}_r \, = \; \interp{\Gamma}; r : \, !\interp{\tau}.\textbf{end} \vdash
\interp{M}_r 
\end{align}

\paragraph{With effects}
Our approach to embedding effectful computations is to simulate
effects by interacting with an effect-handling agent over a session
channel. The embedding, written $\interp{-}_r^{\effChan}$, maps a 
judgement $\Gamma \vdash M : \tau, \eff{F}$ to a session type
judgement with channels $\Delta = (r :
\,!\interp{\tau}.\textbf{end}, \, {\effChan} : \interp{F})$, \ie{}, 
the effect annotation $F$ is interpreted as the session type of 
channel ${\effChan}$. For state, this interpretation 
is defined as in eq. \eqref{def:effect-translation}. 
The embedding first requires an intermediate step, written $\interpA{-}^{\ei,\eo}_r$
\begin{align}
\interpA{\Gamma \vdash M : \tau , F}^{\ei,\eo}_r \, & = \;
\forall g . \quad  \interp{\Gamma}; \;
r : \, !\interp{\tau}, \, 
\ei : \, ? \interp{F \bullet g}, \, \overline{\eo} : ! \interp{ g } \;  
\vdash \interpA{M}^{\ei,\eo}_r 
\end{align}
where $\ei$ and $\eo$ are channels over which channels for simulating 
effects (which we call \emph{effect channels}) are communicated: $\ei$ receives an effect channel of session type
$\interp{F \bullet g}$ (\ie{}, capable of carrying out effects $F
\bullet g$) and $\overline{\eo}$ sends a channel of session type $\interp{g}$
(capable of carrying out effects $g$). Here the effect $g$ is 
universally quantified at the meta level. This provides a way to
``thread'' a channel for effect interactions through a computation, such
as in the case of \letb-binding (see below).  

The embedding of effect calculus terms is then defined:
\begin{align}
\notag \interpA{\synLet{M}{x}{N}}^{\ei,\eo}_r & = \nu q, \ea \, . \; (\interpA{M}^{\ei,\ea}_q\mid \overline{q}?(x). \interpA{N}^{\ea,\eo}_r)\\[-0.1em]
\notag \interpA{x}^{\ei,\eo}_r                & = \ei?(c).r!\langle{x}\rangle.\overline{\eo}!\langle{c}\rangle \\[-0.15em]
\notag\interpA{C}^{\ei,\eo}_r                & = \ei?(c).\interp{C}_r.\overline{\eo}!\langle{c}\rangle \hspace{4.37em} \textit{(when $C$ is pure)} \\[-0.15em]
\label{def:embedding} 
\interpA{op \, M}^{\ei,\eo}_r          & =  \ei?(c).\interp{op \, M}_r.\overline{\eo}!\langle{c}\rangle \hspace{2.9em} \textit{(when $op$ is pure)}
\end{align}
The embedding of variables is straightforward, where an effect channel
$c$ is received on $\ei$ and then sent without use 
on $\overline{\eo}$. Embedding pure operations and constants is similar, reusing the 
pure embedding defined above in equation \eqref{eq:pure-op}.

The \letb{} case resembles the pure embedding of \letb{}, but threads through an effect channel to each sub-expression. An intermediate channel $\ea$ is introduced over which an 
effect channel is passed from the embedding of
$M$ to $N$. Let $\Gamma \vdash M : \sigma, F$ and 
$\Gamma, x : \sigma \vdash N : \tau, G$ then 
in the embedding of $\synLet{M}{x}{N}$ 
the universally quantified effect variable $\forall g$ for $\interpA{M}^{\ei,\ea}_q$ 
 is instantiated to $G \bullet h$. The following partial session-type derivation for the $\letb$ encoding shows the propagation of effects via session types: 

{\small{
\begin{equation*}
\dfrac{
\dfrac{
\dfrac{
q : !\interp{\sigma}, 
\ei : ?\interp{F \bullet g}, 
\overline{\ea} : !\interp{g}
\vdash \interpA{M}^{\ei,\ea}_q
\;\;\;
g \mapsto G \bullet h}{
  q : !\interp{\sigma}, 
\ei : ?\interp{F \bullet (G \bullet h)}, 
\overline{\ea} : !\interp{G \bullet h}
\vdash \interpA{M}^{\ei,\ea}_q}
\qquad 
{ \overline{q} : ?\interp{\sigma}, r : !\interp{\tau}, 
  \ea : ?\interp{G \bullet h}, 
  \overline{\eo} : !\interp{h}
  \vdash \overline{q}?(x). \interpA{N}^{\ea,\eo}_r}}
  {
 r : !\interp{\tau},\ei : ?\interp{F \bullet (G \bullet h)}, \overline{\eo} :!\interp{h}, q : !\interp{\sigma}, \overline{q} : ?\interp{\sigma}, 
  \overline{\ea} :!\interp{G \bullet h}, \ea : ?\interp{G \bullet h}
  \vdash \interpA{M}^{\ei,\ea}_q\mid \overline{q}?(x). \interpA{N}^{\ea,\eo}_r
  }
}
{
 r : !\interp{\tau}, \ei : ?\interp{(F \bullet G) \bullet h}, \overline{\eo} : !\interp{h} 
\, \vdash \, 
\nu q, \ea . \; (\interpA{M}^{\ei,\ea}_q\mid \overline{q}?(x). \interpA{N}^{\ea,\eo}_r)
}
\end{equation*}
}}
Associativity of $\bullet$ is used in the last rule to give the correct typing
for the embedding of $\mathbf{let}$. 

The \emph{get} and \emph{put} operations of our state effects are embedded similarly
to equation \eqref{def:get-put} (page \pageref{def:get-put}), but with the receiving
and sending of the effect channel which is used to interact with the store: 
\begin{align}
\notag \interpA{\getO{}}^{\ei,\eo}_r & = \ei?(c) . {c} \lhd \getS \, . \, {c} ?(x).r!\langle{x}\rangle.\overline{\eo}!\langle{c}\rangle \\
\interpA{\putO{} \, M}^{\ei,\eo}_r & =
\nu q . \; (\interp{M}_{q} \mid \; \ei?(c).\overline{q}?(x).{c} \lhd \putS \, . \, {c} !\langle{x}\rangle.r!\langle{\mathbf{unit}}\rangle.\overline{\eo}!\langle{c}\rangle)
\end{align}
The embedding of $\getO{}$ receives channel $c$ over which
it performs its effect by selecting the $\mathsf{get}$ branch and receiving 
 $x$ which is sent as the result on $r$ before sending $c$ on $\overline{\eo}$. 
The $\putO$ embedding is similar to $\getO{}$ and \letb{}, 
but using the pure embedding $\interp{M}_q$ since $M$ is pure. Again,
the effect channel $c$ is received on $\ei$ and is used to interact
with the store (the usual $\mathsf{put}$ action) before being sent on
$\du{\eo}$. 

The full embedding is then defined in terms of the intermediate embedding
$\interpA{-}$ as follows:
\begin{align}
\!\!\!\!\! \interp{\Gamma \vdash M : \tau , F}^{\effChan}_r = 
\interp{\Gamma}; \,  r : \, !\interp{\tau}, {\mathit{eff}} : \interp{F}
\, \vdash \, 
\nu \ei,\eo . \; (\interpA{\Gamma \vdash M : \tau , F}^{\ei,\eo}_r \mid \, \overline{\ei}!\langle{{\effChan}}\rangle.\eo?(c))
\label{def:top-level}
\end{align}
where ${\effChan{}}$ is the free session channel over which effects are performed. 
Note that $c$ (received on $\eo$) is never used and thus has session type 
$\mathbf{end}$. 

Finally, the embedded program is composed in parallel with the variable
agent, for example:
\begin{align}
\textbf{def} \; \mathit{Store}(x, c)  = \ldots \text{(see eq.~\eqref{def:store})} \; \textbf{in} \; \mathit{Store}\langle{0, \overline{\effChan}}\rangle \mid \interp{\synLet{\mathsf{get}}{x}{\mathsf{put}\, (\sucO \, x)}}^{\effChan}_r
\end{align}

\paragraph{3.1 $\,$ Soundness}
\label{sec:soundness}

The effect calculus exhibits the equational theory defined by the
relation $\equiv$ in Figure~\ref{fig:equations}, 
which enforces monoidal properties on effects and the effect algebra
(assoc),(unitL),(unitR), and which allows pure computations to commute 
with effectful ones (comm). 
Our embedding is sound with respect to these equations 
up to weak bisimulation of session calculus processes (see, \eg{}~\cite{KOUZAPAS_2014}, for more on the weak bisimulation relation). 

\begin{theorem}[Soundness]
If
$\Gamma \vdash M \equiv N : \tau, F$ 
then
$\interp{\Gamma}; 
(r : !\interp{\tau}.\mathbf{end}, \textit{e} : \interp{F})
\vdash \interp{M}^{e}_r \approx \interp{N}^{e}_r$
\end{theorem}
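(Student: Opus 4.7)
The plan is to proceed by induction on the derivation of $\Gamma \vdash M \equiv N : \tau, F$, splitting into the congruence cases and the four axiom cases (assoc), (unitL), (unitR), and (comm). For the congruence cases, I would appeal to the fact that weak bisimulation $\approx$ is a congruence with respect to the process constructors occurring in the image of $\interpA{-}$ (parallel composition, restriction, prefixes, selection, branching) — a standard property for the session calculus of~\cite{YoshidaV07,KOUZAPAS_2014}. Combined with the observation that $\interpA{-}^{\ei,\eo}_r$ is built compositionally by plugging subterm embeddings under these constructors, the inductive hypothesis transfers bisimilarity from subterms to the whole term. The only subtlety is that the intermediate embedding $\interpA{-}$ is parameterised in $(\ei,\eo,r)$, so I would first prove a substitution/renaming lemma showing that $\interpA{-}$ is equivariant in the choice of these channels and that the top-level embedding~\eqref{def:top-level} preserves $\approx$.

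For the axiom cases, I would unfold both sides of the embedding and exhibit a weak bisimulation relating them. The unit laws are easiest: $\interpA{\synLet{x}{y}{N}}^{\ei,\eo}_r$ reduces by a single $\tau$-step on the fresh channel $q$ and an inert forwarding of the effect channel through $\ea$ to a process $\alpha$-equivalent to $\interpA{N[x/y]}^{\ei,\eo}_r$; similarly for (unitR). Associativity of \letb{} reduces under the embedding to a scope-extrusion/reassociation of two $\nu q,\ea$ binders plus use of associativity of $\bullet$ on the effect-channel session types (precisely the step already highlighted in the typing derivation displayed just before the soundness statement). Both directions can be related by the obvious strong bisimulation after permuting independent prefixes, which implies $\approx$.

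The genuinely hard case is (comm), the commutation of a pure computation past an effectful one. Here the two embeddings differ in the order of the intermediate effect channel $\ea$ being threaded: if $M$ is pure and $N$ is effectful, then $\interpA{\synLet{M}{x}{\synLet{N}{y}{P}}}$ and $\interpA{\synLet{N}{y}{\synLet{M}{x}{P}}}$ each receive a single effect channel, but route it through $\interpA{M}$ and $\interpA{N}$ in opposite orders. The key observation is that for a pure $M$, the clauses of $\interpA{-}$ that apply to $M$ have the shape $\ei?(c).\interp{M}_r.\overline{\eo}!\langle{c}\rangle$, i.e., the effect channel $c$ is received, held untouched during the pure computation (whose reductions happen only on fresh $q$-channels), and then forwarded. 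I would prove a separation lemma stating that for any pure $M$, $\interpA{M}^{\ei,\eo}_r \approx (\ei?(c).\overline{\eo}!\langle{c}\rangle) \mid \interp{M}_r'$ for some residual $\interp{M}_r'$ not mentioning $\ei,\eo$, modulo a suitable restriction. Once this is in hand, the two sides of (comm) both reduce to a process that runs $\interp{M}$ in parallel with $\interpA{N}$ sharing the effect channel in series — so they are bisimilar by simple reordering of independent $\tau$ actions.

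The main obstacle will be establishing this separation lemma cleanly, because it is effectively a semantic characterisation of ``purity'' inside the embedding, and it must cover the \textbf{let} case recursively (letting two pure computations be sequenced). A clean formulation is to define, by induction on pure typing derivations $\Gamma \vdash M : \tau, I$, a static property that the channels $\ei,\eo$ are used only linearly and only as forwarders, and then to show that any embedded process with this property commutes (up to $\approx$) with any process touching only the opposite endpoint of the forwarded effect channel. Given this, (comm) follows, and together with the congruence and monoid-law cases the theorem is discharged.
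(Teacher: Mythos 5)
Your overall strategy is the paper's own: reduce the claim to weak bisimilarity of the intermediate embeddings $\interpA{-}^{\ei,\eo}_r$, establish a forwarding lemma for the threaded effect channels, discharge (unitL), (unitR) and (assoc) by scope extrusion, prefix permutation and $\alpha$-renaming, and isolate (comm) as the case requiring a separation/purity lemma of exactly the shape you describe ($\interpA{M}^{\ei,\eo}_r$ factoring into $\ei?(c).\du{\eo}!\langle{c}\rangle$ in parallel with the pure encoding $\interp{M}_r$, stated relative to a surrounding context). The paper states this lemma not as a free-standing bisimilarity but in a context providing the complementary endpoints $\eo?(c).P \mid \du{r}?(x).Q$, which is the precise version of the ``modulo a suitable restriction'' caveat you gesture at; making that context explicit is what lets the lemma be proved by transition analysis rather than by your proposed static linearity/forwarder invariant, but either route should work.

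There is one concrete gap. Your separation lemma is proved ``by induction on pure typing derivations $\Gamma \vdash M : \tau, I$,'' and you describe the \textbf{let} case as ``two pure computations being sequenced.'' But a derivation of $\Gamma \vdash \synLet{M}{x}{N} : \tau, I$ only gives you premises with effects $F$ and $G$ such that $F \bullet G \equiv I$; in an arbitrary effect monoid this does not force $F \equiv G \equiv I$ (take $\mathbb{Z}$ under addition with $F = 1$, $G = -1$), so the inductive hypothesis is not applicable to the subterms and the induction does not close. The paper handles this by imposing an explicit side condition on the effect algebra, $\forall F, G.\; (F \bullet G \equiv I) \Rightarrow (F \equiv G \equiv I)$ (no inverses), which holds for the list-based state effects but must be assumed for the general theorem. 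Your proof needs to state this assumption; without it the (comm) case, and hence the theorem as you argue it, does not go through.
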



%
%

\noindent
Appendix~\ref{sec:proofs} gives the proof. 
Proof of soundness with respect to (comm) requires an additional restriction on the effect algebra, that:
\begin{equation}
\forall F, G . \;\;\;\, (F \bullet G) \equiv I \;\; \Rightarrow \;\, (F \equiv G \equiv I)
\end{equation}
That is, if the composition of two effects is pure,
then both components are themselves pure. This is equivalent to requiring 
that there are no inverse elements (with respect to $\bullet$) in the
effect set $\mathcal{F}$.  The state effect system described here
satisfies this additional effect-algebra condition since any
two lists whose concatenation is the empty list implies that both
lists are themselves empty.

The soundness proof for (comm) is split into an additional lemma that shows
the encoding of effectful terms marked as pure with $I$ can be factored through the encoding of pure terms 
(shown at the start of Section~\ref{sec:embedding}). That is, in an appropriate session calculus context $C$, then:
\begin{equation*}
C[\interpA{\Gamma \vdash M : \tau, I}^{\ei,\eo}_r] \; \approx \; C[\ei?(c).\overline{\eo}!\langle{c}\rangle \mid \interp{M}_r]
\end{equation*}
Thus, the intermediate encoding of a pure term is weakly bisimilar to a pure
encoding (without effect simulation) composed in parallel with a
process which receives an effect-simulating channel $c$ on $\ei$ and
sends it on $\overline{\eo}$ without any use.
Appendix~\ref{sec:proofs} provides the details and proof.

\begin{figure}[t]
\vspace{-0.25em}
{\small{
\begin{align*}
\begin{array}{c}
\trule{assoc}
\dfrac{\Gamma \vdash M : \sigma , F \quad\; \Gamma, x : \sigma \vdash N : \tau', G
                                    \quad\; \Gamma, y : \tau' \vdash P : \tau, H
\quad\; x \not\in FV(P)}
      {(\synLet{(\synLet{M}{x}{N})}{y}{P})
        \equiv (\synLet{M}{x}{(\synLet{N}{y}{P}))} : \tau , F \bullet G \bullet H}
\\[1.3em]
\trule{unitL}
\dfrac{\Gamma \vdash x : \sigma , I \quad \Gamma, y : \sigma \vdash M : \tau , F} 
      {\Gamma \vdash (\synLet{x}{y}{M}) \equiv M[x/y] : \tau, F}
\quad
\trule{unitR}
\dfrac{\Gamma \vdash M : \tau , F}
      {\Gamma \vdash (\synLet{M}{x}{x}) \equiv M : \tau , F}
 \\[1.3em]
\trule{comm}
\dfrac{\Gamma \vdash M : \tau_1, \eff{I} \qquad \Gamma \vdash N : \tau_2, \eff{F} \qquad \Gamma, x : \tau_1 , y: \tau_2 \vdash P : \tau, \eff{G} \quad x \not\in FV(N) \quad y \not\in FV(M)}
      {\Gamma \vdash \synLet{M}{x}{(\synLet{N}{y}{P})}\equiv
       \synLet{N}{y}{(\synLet{M}{x}{P})}: \tau, \eff{F \bullet G}}
\end{array}
\label{gen-alg}
\end{align*}}}
\vspace{-1em}
\caption{Equations of the effect calculus}
\label{fig:equations}
\vspace{-0.3em}
\end{figure}

\section{Discussion}

\paragraph{Concurrent effects}
\label{sec:reasoning}

In a concurrent setting, side effects can lead to non-determinism and race conditions.  For example, the program $\textbf{put} \, (\textbf{get}\, + 2) \!\mid\! \textbf{put} \, (\textbf{get} \, + 1)$ has three possible final values for the store due to arbitrarily interleaved $\textbf{get}$ and $\textbf{put}$ operations.

Consider an extension to the source language which adds an operator
for parallel composition $\texttt{|}$ (we elide details of the type-and-effect rule, but an
 additional effect operator describing parallel effects can be included). 
We might then attempt the following encoding, composing encodings of sub-terms
in parallel: 
\begin{equation*}
\interp{M \, \texttt{|} \, N}^{\effChan}_r 
 = \nu q_1, q_2 \, . \, (\interp{M}^{\effChan}_{q_1} \mid \interp{N}^{\effChan}_{q_2} \mid \overline{q_1}?(x).\overline{q_2}?(y).r!\langle{(x,y)}\rangle)
\end{equation*}
where $q_1$ and $q_2$ are the result channels for each term, from which
the results are paired and sent over $r$. This encoding is not well-typed 
under the session typing scheme: the (par) rule (see Figure~\ref{fig:pi-calc-typing}, 
p.~\pageref{fig:pi-calc-typing}) 
requires that the session channel environments of each process be disjoint, but
$\effChan$ appears on both sides. 
Thus, session types naturally prevent effect interference. 

Concurrent programs with effects can be encoded by extending our
session calculus with \emph{shared channels}, which can be used in parallel
and over which session channels are initiated~\cite{YoshidaV07}.  Shared channels can be used in the encoding of effectful operations ($\textbf{get}$/$\textbf{put}$)
 to lock the store, providing atomicity of
each effectful interaction via the following redefinitions:
%
\begin{align*}
\begin{array}{c}
\textbf{def} \; \textit{Store}(x, k) = \textbf{accept} \; k(c).c \rhd \{ \getS : c!\langle{x}\rangle.\textit{Store}\langle{x, k}\rangle, \; \putS : c?(y).\textit{Store}\langle{y , k}\rangle, \; \mathsf{stop} : \mathbf{0} \} \; \textbf{in} \; \textit{Store}\langle{i, k}\rangle \\[0.2em]
\mathit{get}(k)(x).P = \mathbf{request} \; k(c).\overline{c} \lhd \getS \, . \, \overline{c} ?(x).P  
\qquad\quad
\mathit{put}(k)\langle{V}\rangle.P = \mathbf{request} \; k(c).\overline{c} \lhd \putS \, . \, \overline{c}!\langle{V}\rangle.P 
\end{array}
\end{align*}
where $k$ is a shared channel and \textbf{request}/\textbf{accept} initiate separate binary sessions between the store process and the effectful operations. This ensures atomicity of each 
 side-effect interaction (selecting then send/receiving).
Note however that the type of $k$ will describe the behaviour of all
possible initiated sessions. Via subtyping, this becomes $\oplus [\mathsf{get} : ?[\tau],
\mathsf{put} : ![\tau]]$ which conveys no information about which effect operations
occur, nor their ordering. This reflects the non-deterministic behaviour of concurrent effects, where each effectful operation (as an atom) could be arbitrarily interleaved. 

Further work is to study various other kinds of concurrent effect interaction that could be described 
using the rich language of the session calculus and variations of our embedding. 

%
%
%
%
%


\paragraph{Compiling to the session calculus}

One use for our embedding is as a typed intermediate language for a compiler since the $\pi$-calculus with session primitives provides an expressive language for concurrency.  For example, even without explicit concurrency in the source language our encoding can be used to introduce implicit parallelism as part of a compilation step via the session calculus. In the case of 
compiling a term which matches either side of the (comm) rule above, a pure term $M$ can be computed in parallel with $N$, \ie{}, given terms $\Gamma \vdash M : \tau_1, \eff{I}$ and $\Gamma \vdash N : \tau_2, \eff{F}$ and $\Gamma, x : \tau_1 , y: \tau_2 \vdash P : \tau, \eff{G}$ where $x \not\in FV(N),
 y \not\in FV(M)$ then the following specialised encoding can be given: 
\begin{align*}
\interpA{\synLet{N}{y}{(\synLet{M}{x}{P})}}^{\ei,\eo}_r = & \\[-0.2em]
\interpA{\synLet{M}{x}{(\synLet{N}{y}{P})}}^{\ei,\eo}_r = & \; 
\nu \, q, s, \ea . \, (\interp{M}_q \mid \interpA{N}^{\ei,\ea}_{s} \mid \overline{q}?(x).\overline{s}?(y). \interpA{P}^{\ea,\eo}_r) 
\end{align*}
This alternate encoding introduces the opportunity for parallel evaluation of $M$
and $N$. It is enabled by the effect system (which annotates $M$ with $I$) and it is sound: it is weakly bisimilar to the usual encoding (which follows from the soundness proof of (comm) in Appendix~\ref{sec:proofs}). 

\section{Summary and further work}
\label{sec:higher-order-discussion}

This paper showed that sessions and session types are expressive
enough to encode stateful computations with an effect system. We
formalised this via a sound embedding of a simple, and general, effect
calculus into the session calculus.
 Whilst we have focussed on causal state effects, our
effect calculus and embedding can also be instantiated for I/O effects, 
where $\textit{input}$/$\textit{output}$ operations and effects have a similar 
form to $\textit{get}$/$\textit{put}$. We considered only state effects on 
 a single store, but traditional effect systems account for multiple
stores via \emph{regions} and first-class reference values.
 Our approach could be extended  
with a store and session channel per region or reference.  This is further work. 
Other instantiations of our effect calculus/embedding are further work, for example, for set-based effect systems.

Effect reasoning is more difficult in higher-order settings as the effects
of abstracted computations are locally unknown.  Effect systems
account for this by annotating function types with the \emph{latent
  effects} of a function which are delayed till application. 
%
%
A potential encoding of such types into session types is: 
\begin{align*}
\interp{\sigma \xrightarrow{F} \tau} 
= \, !\interp{\sigma} \, . \, ![?\interp{F \bullet G}] \, . \, ![!\interp{G}] \, . \, ![!\interp{\tau}]
\end{align*}
\ie{}, a channel over which four things can be sent: a $\interp{\sigma}$ value for the function argument, a channel which can receive an effect channel capable of simulating effects $F \bullet G$, a channel over which can be sent an effect channel capable of simulating effects $G$, and a channel which can send a $\interp{\tau}$ for the result. Thus, the encoding of a 
function receives effect handling channels which have the same form as the effect 
channels for first-order term encodings. 
 A full, formal treatment of effects in a higher-order setting, and the requirements on
the underlying calculi, is forthcoming work.

Effects systems also commonly include a (partial) ordering on effects,
which describes how effects can be
overapproximated~\cite{gifford1986effects}. For example, causal state
effects could be ordered by prefix inclusion, thus an expression $M$ with
judgement $\Gamma \vdash M : \tau, \eff{[\getF{} \, \tau]}$ might have
its effects overapproximated (via a subsumption rule) to $\Gamma
\vdash M : \tau, \eff{[\getF{} \, \tau, \putF{} \, \tau']}$. It is
possible to account for (some) subeffecting using subtyping of
sessions. Formalising this is further work.

Whilst we have embedded effects into sessions, the converse seems
possible: to embed sessions into effects. 
Nielson and Nielson previously defined an effect system for
higher-order concurrent programs which resembles some aspects
of session types~\cite{nielson1994higher}. Future work is to 
explore mutually inverse embeddings of sessions and effects. 
Relatedly, further work is to explore whether various kinds of \emph{coeffect system}
(which dualise effect systems, analysing
context and resource use~\cite{DBLP:conf/icfp/PetricekOM14})
 such as bounded linear logics, can also be embedded into session types 
or vice versa. 


\paragraph{Acknowledgements}

Thanks to Tiago Cogumbreiro and the anonymous reviewers for their
feedback. 
The work has been partially sponsored by  EPSRC EP/K011715/1, EP/K034413/1,
 and EP/L00058X/1, and EU project FP7-612985 UpScale.

\bibliography{references}

\begin{thebibliography}{10}
\providecommand{\bibitemdeclare}[2]{}
\providecommand{\surnamestart}{}
\providecommand{\surnameend}{}
\providecommand{\urlprefix}{Available at }
\providecommand{\url}[1]{\texttt{#1}}
\providecommand{\href}[2]{\texttt{#2}}
\providecommand{\urlalt}[2]{\href{#1}{#2}}
\providecommand{\doi}[1]{doi:\urlalt{http://dx.doi.org/#1}{#1}}
\providecommand{\bibinfo}[2]{#2}

\bibitemdeclare{article}{bocchino2009type}
\bibitem{bocchino2009type}
\bibinfo{author}{Robert~L \surnamestart Bocchino~Jr\surnameend},
  \bibinfo{author}{Vikram~S \surnamestart Adve\surnameend},
  \bibinfo{author}{Danny \surnamestart Dig\surnameend},
  \bibinfo{author}{Sarita~V \surnamestart Adve\surnameend},
  \bibinfo{author}{Stephen \surnamestart Heumann\surnameend},
  \bibinfo{author}{Rakesh \surnamestart Komuravelli\surnameend},
  \bibinfo{author}{Jeffrey \surnamestart Overbey\surnameend},
  \bibinfo{author}{Patrick \surnamestart Simmons\surnameend},
  \bibinfo{author}{Hyojin \surnamestart Sung\surnameend} \&
  \bibinfo{author}{Mohsen \surnamestart Vakilian\surnameend}
  (\bibinfo{year}{2009}): \emph{\bibinfo{title}{A {T}ype and {E}ffect {S}ystem
  for {D}eterministic {P}arallel {J}ava}}.
\newblock {\sl \bibinfo{journal}{In Proocedings of OOPSLA 2009}}, pp.
  \bibinfo{pages}{97--116}, \doi{10.1145/1640089.1640097}.

\bibitemdeclare{inproceedings}{CDY2014}
\bibitem{CDY2014}
\bibinfo{author}{Tzu-Chun \surnamestart Chen\surnameend},
  \bibinfo{author}{Mariangiola \surnamestart Dezani-Ciancaglini\surnameend} \&
  \bibinfo{author}{Nobuko \surnamestart Yoshida\surnameend}
  (\bibinfo{year}{2014}): \emph{\bibinfo{title}{{On the Preciseness of
  Subtyping in Session Types}}}.
\newblock In: {\sl \bibinfo{booktitle}{PPDP 2014}}, \bibinfo{publisher}{ACM
  Press}, pp. \bibinfo{pages}{146--135}, \doi{10.1145/2643135.2643138}.

\bibitemdeclare{inproceedings}{gifford1986effects}
\bibitem{gifford1986effects}
\bibinfo{author}{David~K. \surnamestart Gifford\surnameend} \&
  \bibinfo{author}{John~M. \surnamestart Lucassen\surnameend}
  (\bibinfo{year}{1986}): \emph{\bibinfo{title}{Integrating functional and
  imperative programming}}.
\newblock In: {\sl \bibinfo{booktitle}{Proceedings of Conference on LISP and
  func. prog.}}, \bibinfo{series}{LFP '86}, \doi{10.1145/319838.319848}.

\bibitemdeclare{article}{KOUZAPAS_2014}
\bibitem{KOUZAPAS_2014}
\bibinfo{author}{Dimitrios \surnamestart Kouzapas\surnameend},
  \bibinfo{author}{Nobuko \surnamestart Yoshida\surnameend},
  \bibinfo{author}{Raymond \surnamestart Hu\surnameend} \&
  \bibinfo{author}{Kohei \surnamestart Honda\surnameend}
  (\bibinfo{year}{2014}): \emph{\bibinfo{title}{On asynchronous eventful
  session semantics}}.
\newblock {\sl \bibinfo{journal}{Mathematical Structures in Computer Science}},
  pp. \bibinfo{pages}{1--62}, \doi{10.1017/s096012951400019x}.

\bibitemdeclare{article}{MilnerR:funp}
\bibitem{MilnerR:funp}
\bibinfo{author}{Robin \surnamestart Milner\surnameend} (\bibinfo{year}{1992}):
  \emph{\bibinfo{title}{Functions as Processes}}.
\newblock {\sl \bibinfo{journal}{MSCS}}
  \bibinfo{volume}{2}(\bibinfo{number}{2}), pp. \bibinfo{pages}{119--141},
  \doi{10.1017/s0960129500001407}.

\bibitemdeclare{article}{MostrousY15}
\bibitem{MostrousY15}
\bibinfo{author}{Dimitris \surnamestart Mostrous\surnameend} \&
  \bibinfo{author}{Nobuko \surnamestart Yoshida\surnameend}
  (\bibinfo{year}{2015}): \emph{\bibinfo{title}{Session typing and asynchronous
  subtyping for the higher-order {\(\pi\)}-calculus}}.
\newblock {\sl \bibinfo{journal}{Inf. Comput.}} \bibinfo{volume}{241}, pp.
  \bibinfo{pages}{227--263}, \doi{10.1016/j.ic.2015.02.002}.

\bibitemdeclare{incollection}{nielson1999type}
\bibitem{nielson1999type}
\bibinfo{author}{Flemming \surnamestart Nielson\surnameend} \&
  \bibinfo{author}{Hanne~Riis \surnamestart Nielson\surnameend}
  (\bibinfo{year}{1999}): \emph{\bibinfo{title}{Type and effect systems}}.
\newblock In: {\sl \bibinfo{booktitle}{Correct System Design}},
  \bibinfo{publisher}{Springer}, pp. \bibinfo{pages}{114--136},
  \doi{10.1007/3-540-48092-7\_6}.

\bibitemdeclare{inproceedings}{nielson1994higher}
\bibitem{nielson1994higher}
\bibinfo{author}{Hanne~Riis \surnamestart Nielson\surnameend} \&
  \bibinfo{author}{Flemming \surnamestart Nielson\surnameend}
  (\bibinfo{year}{1994}): \emph{\bibinfo{title}{Higher-order concurrent
  programs with finite communication topology}}.
\newblock In: {\sl \bibinfo{booktitle}{Proceedings of the symposium on
  Principles of programming languages}}, \bibinfo{organization}{ACM}, pp.
  \bibinfo{pages}{84--97}, \doi{10.1145/174675.174538}.

\bibitemdeclare{inproceedings}{DBLP:conf/icfp/PetricekOM14}
\bibitem{DBLP:conf/icfp/PetricekOM14}
\bibinfo{author}{Tomas \surnamestart Petricek\surnameend},
  \bibinfo{author}{Dominic~A. \surnamestart Orchard\surnameend} \&
  \bibinfo{author}{Alan \surnamestart Mycroft\surnameend}
  (\bibinfo{year}{2014}): \emph{\bibinfo{title}{Coeffects: a calculus of
  context-dependent computation}}.
\newblock In: {\sl \bibinfo{booktitle}{Proceedings of ICFP}}, pp.
  \bibinfo{pages}{123--135}, \doi{10.1145/2628136.2628160}.

\bibitemdeclare{book}{SangiorgiD:picatomp}
\bibitem{SangiorgiD:picatomp}
\bibinfo{author}{Davide \surnamestart Sangiorgi\surnameend} \&
  \bibinfo{author}{David \surnamestart Walker\surnameend}
  (\bibinfo{year}{2001}): \emph{\bibinfo{title}{The $\pi$-{C}alculus: a
  {T}heory of {M}obile {P}rocesses}}.
\newblock \bibinfo{publisher}{Cambridge University Press},
  \doi{10.2178/bsl/1182353926}.

\bibitemdeclare{inproceedings}{TalpinJP:typeffd}
\bibitem{TalpinJP:typeffd}
\bibinfo{author}{Jean-Pierre \surnamestart Talpin\surnameend} \&
  \bibinfo{author}{Pierre \surnamestart Jouvelot\surnameend}
  (\bibinfo{year}{1992}): \emph{\bibinfo{title}{The type and effect
  discipline}}.
\newblock In: {\sl \bibinfo{booktitle}{Proc.~LICS'92}}, pp.
  \bibinfo{pages}{162--173}, \doi{10.1109/lics.1992.185530}.

\bibitemdeclare{article}{YoshidaV07}
\bibitem{YoshidaV07}
\bibinfo{author}{Nobuko \surnamestart Yoshida\surnameend} \&
  \bibinfo{author}{Vasco~Thudichum \surnamestart Vasconcelos\surnameend}
  (\bibinfo{year}{2007}): \emph{\bibinfo{title}{Language Primitives and Type
  Discipline for Structured Communication-Based Programming Revisited: Two
  Systems for Higher-Order Session Communication}}.
\newblock {\sl \bibinfo{journal}{Electr. Notes Theor. Comput. Sci.}}
  \bibinfo{volume}{171}(\bibinfo{number}{4}), pp. \bibinfo{pages}{73--93},
  \doi{10.1016/j.entcs.2007.02.056}.

\end{thebibliography}

\appendix

%

\section{Subtyping and selection}
\label{subtyping-discussion} 

Our session typing system assigns selection types that include 
only the label $l$ being selected ((select) in Figure~\ref{fig:pi-calc-typing}).
Duality with branch types is provided by subtyping on selection types: 
\begin{align*}
\trule{sel} \qquad {\oplus [\tilde{l} : \tilde{S}] \, \prec \, \oplus[\tilde{l} : \tilde{S}, \tilde{l}' : \tilde{S}']}
\end{align*}
(this is a special case of the usual full subtyping rule for selection, see~\cite[\textsc{[sub-sel]}, Table 5, p. 4]{CDY2014}). Therefore, for example, the $\textit{get}$ process
could be typed: 
\begin{align*}
\trule{sub}\dfrac{
\dfrac{\Gamma, x : \tau; \Delta; \overline{c} : S \vdash P}
{\Gamma; \Delta , \overline{c} : \oplus [\getS : \, ? [\tau] . S]
\vdash \mathit{get}(c)(x).P} \quad \trule{sel}\dfrac{}{\oplus [\getS : \, ? [\tau] . S] \prec \oplus[\getS : \, ? [\tau] . S, \putS : \, ! [\tau] . S]}}
{\Gamma; \Delta , \overline{c} : \oplus [\getS : \, ? [\tau] . S, \putS : \, ![\tau] . S] \vdash \mathit{get}(c)(x).P}
\end{align*}

\noindent
However, such subtyping need only be applied when duality is being
checked, that is, when opposing endpoints of a channel are bound by channel
restriction, $\nu c . P$.  We take this approach, thus subtyping is
only used with channel restriction such that, prior to restriction,
session types can be interpreted as effect annotations
with selection types identifying effectful operations.

\ifappendix
\subsection{Session typing derivations for \textit{Store}}
\label{sec:store-types}

\subsection{Shared channel}

\begin{align*}
 \trule{acc}
 \dfrac{\Gamma; \Delta, c : S \vdash P}
       {\Gamma, k : \langle{S, \overline{S}}\rangle; \Delta \vdash \texttt{accept} \, k (c) . P}
 & 
 \trule{req}
 \dfrac{\Gamma; \Delta, c : \overline{S} \vdash P}
       {\Gamma, k : \langle{S, \overline{S}}\rangle; \Delta \vdash \texttt{request} \, k (c) . P}
\end{align*}
\else
\fi

\section{Agda encoding}
\label{sec:agda}

The Agda formalisation of our embedding defines data types of typed terms for the
effect calculus \texttt{\_,\_$\vdash$\_,\_} and session calculus \texttt{\_*\_\(\vdash\)\_},
indexed by the effects, types, and contexts terms: \\[-0.75em]
\begin{SVerbatim}
\textbf{data}_,_\(\vdash\)_,_ (eff : Effect) : (Gam : Context Type) -> Type -> (Carrier eff) -> Set \textbf{where} \(\ldots\)
\textbf{data} _*_\(\vdash\)_: (\(\Gamma\) : Context VType) -> (\(\Sigma\) : Context SType) -> (t : PType) -> Set \textbf{where} \(\ldots\)
\end{SVerbatim}
\vspace{0.25em}

\noindent
These type constructors are multi-arity infix operators. 
For the effect calculus type, the first index \texttt{eff \!\!: \!\!\!Effect} is a record
providing the effect algebra, operations, and constants, of which the 
\texttt{Carrier} field holds the type for effect annotations.
The embedding is then a function: \\[-0.75em]
\begin{SVerbatim}
embed : \textbf{forall} \{\(\Gamma\) \(\tau\) F\} -> (e : stEff , \(\Gamma\) \(\vdash\) \(\tau\) , F)
                    \hspace{0.17em}   -> (map interpT \(\Gamma\)) * ((Em , [ interpT \(\tau\) ]!\(\cdot\,\)end) , interpEff F) \(\vdash\) proc
\end{SVerbatim}
\vspace{0.25em}

\noindent
where {\small{\texttt{interpT : Type -> VType}}} maps types of the effect calculus to value types for
sessions, and {\small{\texttt{interpEff : List StateEff -> SType}}} maps
state effect annotations to session types {\small{\texttt{SType}}}. Here
the constructor \texttt{[\_]!\(\cdot\)\_} is a binary data constructor 
representing the session type for send. The intermediate embedding has the type
(which also uses the receive session type \texttt{[\_]?\(\cdot\)\_}): \\[-0.55em]
\begin{SVerbatim}
embedInterm : \textbf{forall} \{\(\Gamma\) \(\tau\) F G\} 
  -> (M : stEff , \(\Gamma\) \(\vdash\) \(\tau\) , F)
  -> (map interpT \(\Gamma\) * (((Em , [ interpT \(\tau\) ]!\(\cdot\) end) , [ sess (interpEff (F ++ G)) ]?\(\cdot\) end) 
                     \hspace{11.3em}       , [ sess (interpEff G) ]!\(\cdot\,\)end) \(\vdash\) proc
\end{SVerbatim}  
\vspace{-0.6em}

\section{Soundness proof of embedding, wrt. Figure~\ref{fig:equations} equations}
\label{sec:proofs}

\newcommand{\defEq}{\stackrel{\text{def}}{=}}

\textbf{Theorem} (Soundness). 
If
$\Gamma \vdash M \equiv N : \tau, F$ 
then
$\interp{\Gamma}; 
(r : \, !\interp{\tau}.\mathbf{end}, \textit{e} : \interp{F})
\vdash \interp{M}^{e}_r \approx \interp{N}^{e}_r$ \\[-0.6em]

\noindent
\textnormal{\textbf{Proof}} $\;$
We make use of an intermediate lemma, which we call \emph{forwarding}, where for all $M$:
\begin{align*}
& \nu \ea . (\interpA{M}^{\ei,\ea}_r \mid \du{\ea}?(c).\eo!\langle{c}\rangle.P)
\; \approx \; \interpA{M}^{\ei,\eo}_r \mid P  \quad 
\wedge \quad  \nu \ea . (\interpA{M}^{\ea,\eo}_r \mid \ei?(c).\du{\ea}!\langle{c}\rangle.P) 
\; \approx \; \interpA{M}^{\ei,\eo}_r \mid P
\end{align*}
where $c$ is not free in $P$. 
This follows by induction on the definition of the intermediate embedding.

Since $\interp{M}^{e}_r = \nu \ei, \eo . \; (\interpA{M}^{\ei,\eo}_r \mid \, \overline{\ei}!\langle{e}\rangle.{\eo}?(c))$ 
and $\interp{N}^{e}_r = \nu \ei, \eo . \; (\interpA{N}^{\ei,\eo}_r \mid \, \overline{\ei}!\langle{e}\rangle.{\eo}?(c))$ (eq.~\ref{def:top-level}) we need only consider
 $\interpA{M}^{\ei,\eo}_r \approx \interpA{N}^{\ei,\eo}_r$ \ie{}, weak bisimilarity of the intermediate
embeddings. We address each equation in turn. The relation $\defEq{}$
denotes definitional equality based on $\interpA{-}^{\ei,\eo}_r$.
\def\arraystretch{1.3}
\begin{align*}
\begin{array}{rlr}
& \hspace{-1em}\text{(unitR)} \\
& \quad\;\; \interpA{\synLet{M}{x}{x}}^{\ei,\eo}_r \\
 &
\defEq \nu \, q, \ea . \, (\interpA{M}^{\ei,\ea}_q \mid \overline{q}?(x). \ea?(c).r!\langle{x}\rangle.\overline{\eo}!\langle{c}\rangle) \\[-0.1em]
 &
\approx \nu \, \ea . \, (\interpA{M}^{\ei,\ea}_r \mid \ea?(c).\overline{\eo}!\langle{c}\rangle)
& \quad \text{\{forwarding $q \rightarrow r$\}} \\
& \approx \interpA{M}^{\ei,\eo}_r \quad \Box 
& \quad \text{\{forwarding $\ea \rightarrow \eo$\}} \\[0em]
%
%
& \hspace{-1em}\text{(unitL)} \\
& \quad\;\; \interpA{\synLet{x}{y}{M}}^{\ei,\eo}_r \\[-0.1em]
& \defEq \nu \, q, \ea . (\interpA{x}^{\ei,\ea}_q \mid \overline{q}?(y). \interpA{M}^{\ea,\eo}_r) \\[-0.1em]
& \defEq \nu \, q, \ea . (\ei?(c).q!\langle{x}\rangle.\overline{\ea}!\langle{c}\rangle
\mid \overline{q}?(y). \interpA{M}^{\ea,\eo}_r)\\[-0.1em]
& \approx \nu \, \ea . (\ei?(c).\overline{\ea}!\langle{c}\rangle
\mid \interpA{M}^{\ea,\eo}_r[x/y]) & \quad \text{\{$\beta$, structural congruence\}} \\ 
& \approx \interpA{M}^{\ei,\eo}_r[x/y] & \quad \text{\{forwarding $\ei \rightarrow \ea$}\} \\
& \approx \interpA{M [x/y]}^{\ei,\eo}_r \quad \Box & \quad \text{\{var substitution preserved by $\interpA{-}$\}}
\end{array}
\end{align*}
\vspace{-1.2em}
\begin{align*}
\begin{array}{rlr}
& \hspace{-1em}\text{(assoc)} \\
& \quad\;\; \interpA{\synLet{(\synLet{M}{x}{N})}{y}{P}}^{\ei,\eo}_r \\
& \defEq \nu \, q, \ea . \, (\interpA{\synLet{M}{x}{N}}^{\ei,\ea}_q \mid \overline{q}?(y). \interpA{P}^{\ea,\eo}_r) \\[-0.1em]
& \defEq \nu \, q, \ea . \, (\nu \, q1, \eb . \, (\interpA{M}^{\ei,\eb}_{q1} \mid \overline{q1}?(x). \interpA{N}^{\eb,\ea}_q) \mid \overline{q}?(y). \interpA{P}^{\ea,\eo}_r)
& \\
(*) \hspace{-0.75em} & \approx \nu \, q, \ea, q1, \eb . \, (\interpA{M}^{\ei,\eb}_{q1} \mid \overline{q1}?(x). \interpA{N}^{\eb,\ea}_q \mid \overline{q}?(y). \interpA{P}^{\ea,\eo}_r) 
&\hspace{-0.6em} \text{\small{\{structural congruence\}}} \\[0.35em]
%
& \quad\;\; \interpA{\synLet{M}{x}{(\synLet{N}{y}{P})}}^{\ei,\eo}_r \\
%
& \defEq \nu \, q, \ea . \, (\interpA{M}^{\ei,\ea}_q \mid \overline{q}?(x). \interpA{\synLet{N}{y}{P}}^{\ea,\eo}_r) \\
%
& \defEq \nu \, q, \ea . \, (\interpA{M}^{\ei,\ea}_q \mid \overline{q}?(x). \nu \, q1, \eb . \, (\interpA{N}^{\ea,\eb}_{q1} \mid \overline{q1}?(y). \interpA{P}^{\eb,\eo}_r)) \\ 
%
& \approx \nu \, q, \ea, q1, \eb . \, (\interpA{M}^{\ei,\ea}_q \mid \overline{q}?(x). \interpA{N}^{\ea,\eb}_{q1} \mid \overline{q1}?(y). \interpA{P}^{\eb,\eo}_r) 
  & \hspace{-1.6em} \text{\small{\{sequentiality, $x \not\in fv(P)$\}}} \\
& \approx \nu \, q, \ea, q1, \eb . \, (\interpA{M}^{\ei,\eb}_{q1} \mid \overline{q1}?(x). 
\interpA{N}^{\eb,\ea}_{q} \mid \overline{q}?(y). \interpA{P}^{\ea,\eo}_r) 
  & \hspace{-1.2em} \text{\small{\{$\alpha$, $\ea \leftrightarrow \eb, q \leftrightarrow q1$\}}} \\
& \approx (*) \quad \Box 
\end{array}
\end{align*}

\noindent
The proof of the commutativity axiom (comm) relies on an additional equation of the effect algebra, that
for all $F, G$ then $F \bullet G \equiv I \Rightarrow (F \equiv G \equiv I)$. The proof 
below is factored into an additional lemma about encodings of pure computations, given 
 after in Lemma~\ref{lem:purity}. Essentially, given the right session calculus
context $C$, then $C[\interpA{\Gamma \vdash M : \tau, I}^{\ei,\eo}_r] \approx C[\ei?(c).\du{\eo}!\langle{c}\rangle
\mid \interp{M}_r]$. That is, a pure computation can be factored into a pure encoding and a forwarding
from the input effect-channel-carrying channel to the output channel. 

\begin{align*}
\begin{array}{rlr}
& \hspace{-1em} \text{(comm) where $\Gamma \vdash M : \tau_1, I$} \\
& \quad\; \interpA{\synLet{M}{x}{(\synLet{N}{y}{P})}}^{\ei,\eo}_r \\
& \defEq \nu \, q, \ea . \, (\interpA{M}^{\ei,\ea}_q \mid \overline{q}?(x). \interpA{\synLet{N}{y}{P}}^{\ea,\eo}_r) \\
& \defEq \nu \, q, \ea . \, (\interpA{M}^{\ei,\ea}_q \mid \overline{q}?(x). \nu \, q1, \eb . \, (\interpA{N}^{\ea,\eb}_{q1} \mid \overline{q1}?(y). \interpA{P}^{\eb,\eo}_r)) \\ 
& \approx \nu \, q, \ea, q1, \eb . \, (\interpA{M}^{\ei,\ea}_q \mid \interpA{N}^{\ea,\eb}_{q1} \mid \overline{q}?(x).\overline{q1}?(y). \interpA{P}^{\eb,\eo}_r) 
  & \hspace{-2.4em} \text{\small{\{sequentiality, $x \not\in fv(N)$\}}} \\
& \approx \nu \, q, \ea, q1, \eb . \, (\ei?(c).\overline{\ea}!\langle{c}\rangle \mid \interp{M}_q \mid \interpA{N}^{\ea,\eb}_{q1} \mid \overline{q}?(x).\overline{q1}?(y). \interpA{P}^{\eb,\eo}_r) 
& \hspace{-1.4em} \text{\small{\{purity lemma~\ref{lem:purity} on $M$\}}} \\
\hspace{-0.5em}(*) \hspace{-0.64em} & \approx \nu \, q, q1, \eb . \, (\interp{M}_q \mid \interpA{N}^{\ei,\eb}_{q1} \mid \overline{q}?(x).\overline{q1}?(y). \interpA{P}^{\eb,\eo}_r) 
& \hspace{-1em} \text{\small{\{forwarding $\ei \rightarrow \ea$\}}} \\[0.8em]
& \quad\; \interpA{\synLet{N}{y}{(\synLet{M}{x}{P})}}^{\ei,\eo}_r \\
& \defEq \nu \, q, \ea . \, (\interpA{N}^{\ei,\ea}_q \mid \overline{q}?(y). \interpA{\synLet{M}{x}{P}}^{\ea,\eo}_r) \\
& \defEq \nu \, q, \ea . \, (\interpA{N}^{\ei,\ea}_q \mid \overline{q}?(y).
\nu \, q1, \eb . \, (\interpA{M}^{\ea,\eb}_{q1} \mid \overline{q1}?(x). \interpA{P}^{\eb,\eo}_r)) \\
& \approx \nu \, q, \ea, q1, \eb . \, (\interpA{N}^{\ei,\ea}_q \mid 
\interpA{M}^{\ea,\eb}_{q1} \mid \overline{q}?(y).\overline{q1}?(x). \interpA{P}^{\eb,\eo}_r)
& \hspace{-2.4em} \text{\small{\{sequentiality, $y \not\in fv(M)$\}}} \\
& \approx \nu \, q, \ea, q1, \eb . \, (\interpA{N}^{\ei,\ea}_q \mid 
\ea?(c).\overline{\eb}!\langle{c}\rangle \mid \interp{M}_{q1} \mid \overline{q}?(y).\overline{q1}?(x). \interpA{P}^{\eb,\eo}_r) \;\;
& \hspace{-1.4em} \text{\small{\{purity lemma~\ref{lem:purity} on $M$\}}} \\
& \approx \nu \, q, q1, \ea . \, (\interpA{N}^{\ei,\ea}_q \mid 
\interp{M}_{q1} \mid \overline{q}?(y).\overline{q1}?(x). \interpA{P}^{\ea,\eo}_r)
& \hspace{-1em} \text{\small{\{forwarding $\ea \rightarrow \eb$\}}} \\
& \equiv \nu \, q, q1, \ea . \, (\interp{M}_{q1} \mid \interpA{N}^{\ei,\ea}_q \mid 
\overline{q}?(y).\overline{q1}?(x). \interpA{P}^{\ea,\eo}_r)
& \hspace{-1.4em} \text{\small{\{structural congruence\}}} \\
& \stackrel{\alpha}{\equiv} \nu \, q, q1, \eb . \, (\interp{M}_{q} \mid \interpA{N}^{\ei,\eb}_{q1} \mid 
\overline{q1}?(y).\overline{q}?(x). \interpA{P}^{\eb,\eo}_r)
& \hspace{-1em} \text{\small{\{$\alpha$, $q \leftrightarrow q1$, $\ea \leftrightarrow \eb$\}}} \\
& \approx \nu \, q, q1, \eb . \, (\interp{M}_{q} \mid \interpA{N}^{\ei,\eb}_{q1} \mid 
\overline{q}?(x).\overline{q1}?(y).\interpA{P}^{\eb,\eo}_r)
& \hspace{-1em} \text{\small{\{reorder recv.\}}} \\
& \approx (*) \quad \Box
\end{array}
\end{align*}

\begin{lemma}[Pure encodings, in context]
If an effect system has the property that
$\forall F, G .   (F \bullet G \equiv I) \, \Rightarrow \, (F \equiv G \equiv I)$ then, 
for all $M$, $\Gamma$, $\tau, P, Q$ it follows that:
$$
\interpA{\Gamma \vdash M : \tau, I}^{\ei,\eo}_r \mid \eo?(c).P \mid \du{r}?(x).Q 
\;\;\; \approx \;\;\;
\ei?(c).\du{\eo}!\langle{c}\rangle \mid \interp{M}_r \mid \eo?(c).P \mid \du{r}?(x).Q
$$
\label{lem:purity}
\vspace{-1.6em}
\end{lemma}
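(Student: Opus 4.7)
The plan is induction on the derivation of $\Gamma \vdash M : \tau, I$. The no-inverses hypothesis on the effect monoid is the essential ingredient: it ensures that whenever $M$ is a $\letb$-binding $\synLet{M_1}{x}{M_2}$ derivable with effect $I$, the premise effects $F \bullet G \equiv I$ force $F \equiv G \equiv I$, so both subterms are themselves pure and the inductive hypothesis applies to each. Without this property, a $\letb$-binding with non-trivial but cancelling effects would escape the induction entirely.

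For the base cases (variables, pure constants, and pure unary operations), the intermediate embedding has the common shape $\ei?(c).R.\du{\eo}!\langle{c}\rangle$ where $R$ is literally the pure encoding $\interp{M}_r$ (a single send on $r$, optionally preceded by pure value manipulation). Placed in the given context, both sides end up performing the same three communications: a receive on $\ei$, synchronisation of the $r$-send inside $R$ with $\du{r}?(x).Q$, and the hand-off on $\du{\eo}/\eo$ feeding $\eo?(c).P$. On the LHS these are serialised by the dots, while on the RHS the middle one proceeds in parallel with the $\ei/\du{\eo}$ forwarder; because the pure body mentions neither $c$, $\ei$, nor $\eo$, the two orderings are observationally indistinguishable, and a small explicit candidate relation between the reachable LHS and RHS configurations witnesses $\approx$.

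For the inductive $\letb$ case, unfold $\interpA{\synLet{M_1}{x}{M_2}}^{\ei,\eo}_r$ to $\nu q, \ea . (\interpA{M_1}^{\ei,\ea}_q \mid \du{q}?(x).\interpA{M_2}^{\ea,\eo}_r)$, use scope extrusion to bring the outer receivers inside the binders, and apply the IH twice—to $M_2$ with the ambient receivers $\eo?(c).P \mid \du{r}?(x).Q$ (already of the right shape), and to $M_1$ with the newly exposed $\du{q}?(x)$-prefixed continuation acting as the consumer on $q$ and the $\ea$-forwarder extracted from the first rewrite acting as the consumer on $\ea$. After both rewrites the process contains two adjacent forwarders on $\ea$, which collapse under $\nu \ea$ to a single $\ei$-to-$\du{\eo}$ forwarder by the forwarding lemma of the soundness proof, leaving exactly $\ei?(c).\du{\eo}!\langle{c}\rangle \mid \interp{\synLet{M_1}{x}{M_2}}_r$ together with the original context.

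The main obstacle is that the lemma as stated fixes the surrounding context to the specific shape $\eo?(c).P \mid \du{r}?(x).Q$, whereas the IH call on $M_1$ inside the $\letb$ case sees a differently shaped context in which the consumer on $\ea$ is buried inside a prefix-guarded continuation rather than offered directly. I would address this by first strengthening the lemma to quantify over any well-typed parallel context providing matching duals on $\eo$ and $\du{r}$, exploiting that $\approx$ is a congruence for the restricted $\pi$-calculus contexts under consideration; the base-case argument and the forwarder-collapse step are unchanged by this strengthening, so the whole induction then goes through uniformly.
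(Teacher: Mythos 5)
Your proposal matches the paper's proof essentially step for step: induction over type-and-effect derivations with pure conclusion, the no-inverses condition forcing both premises of a pure $\letb$ to be pure so the induction applies to each, base cases discharged by matching the finitely many transitions of both sides, and the $\letb$ case closed by two applications of the inductive hypothesis followed by collapsing the resulting forwarders on the intermediate channel $\ea$ via the forwarding lemma. The context-shape obstacle you identify is real but is one the paper itself glosses over (it simply applies the hypothesis $(B)$ underneath the $\overline{q}?(x)$ prefix with the ambient receivers moved inside), so your proposed strengthening to arbitrary well-typed parallel contexts is a reasonable, slightly more rigorous rendering of the same argument rather than a different one.
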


\begin{proof}[\textnormal{\textbf{Proof}}]
By induction over type-and-effect derivations with a pure effect in the conclusion. 
\begin{itemize}
\item (var) 
\vspace{-1.6em}
\begin{align*}
\begin{array}{ll}
\begin{array}{rl}
  & \interpA{\Gamma \vdash x : I}^{\ei,\eo}_r \mid \eo?(c).P \mid \du{r}?(x).Q \\
= & \ei?(c).r!\langle{x}\rangle.\du{\eo}!\langle{c}\rangle \mid \eo?(c).P \mid \du{r}?(x).Q \\
\xrightarrow{\ei(c)} & r!\langle{x}\rangle.\du{\eo}!\langle{c}\rangle \mid \eo?(c).P \mid \du{r}?(x).Q  \\
\xrightarrow{\tau} & \du{\eo}!\langle{c}\rangle \mid \eo?(c).P \mid Q \\
\xrightarrow{\tau} & P \mid Q 
\end{array}
&
\begin{array}{rl}
  & \ei?(c).\du{\eo}!\langle{c}\rangle \mid \interp{x}_r \mid \eo?(c).P \mid \du{r}?(x).Q \\
= & \ei?(c).\du{\eo}!\langle{c}\rangle \mid r!\langle{x}\rangle \mid \eo?(c).P \mid \du{r}?(x).Q  \\
\xrightarrow{\ei(c)} & \du{\eo}!\langle{c}\rangle \mid r!\langle{x}\rangle \mid \eo?(c).P \mid \du{r}?(x).Q  \\
\xrightarrow{\tau} & r!\langle{x}\rangle \mid P \mid \du{r}?(x).Q \\
\xrightarrow{\tau} & P \mid Q 
\end{array}
\end{array}
\end{align*}

\item (const) Similar to (var), where $\interpA{\Gamma \vdash C : C_\tau, I}^{\ei,\eo}_r = \ei?(c).\interp{C}_r.\du{\eo}!\langle{c}\rangle$ which has the same shape as the (var) encoding and thus follows a similar proof to the above. 
\item (op) Similar to the above, where $\interpA{\Gamma \vdash op \, M : Op_\tau, I}^{\ei,\eo}_r = \ei?(c).\interp{op \, M}_r.\du{\eo}!\langle{c}\rangle$ which has the same shape as (var) and (const) encodings.
\item (let) By the additional requirement that $\forall F, G . (F \bullet G \equiv I) \Rightarrow (F \equiv G \equiv I)$ then the encoding of a type-and-effect derivation rooted in the (let) rule (with pure effect) 
is necessarily of the form: 
\begin{equation*}
\interpA{\Gamma \vdash \synLet{M}{x}{N} : \tau, I}^{\ei,\eo}_r = \nu q, \ea . (\interpA{\Gamma \vdash M : \sigma, I}^{\ei,\ea}_q \mid \overline{q}?(x). \interpA{\Gamma, x : \sigma \vdash N : \tau, I}^{\ea,\eo}_r)
\end{equation*}
From the premises, the inductive hypothesis are:
\begin{align*}
(A) \;\; & \interpA{\Gamma \vdash M : \sigma, I}^{\ei,\ea}_q \mid \ea?(c).P \mid \du{q}?(x).Q 
\;\; \approx \;\;
\ei?(c).\du{\ea}!\langle{c}\rangle \mid \interp{M}_q \mid \ea?(c).P \mid \du{q}?(x).Q \\
(B) \;\; & \interpA{\Gamma, x : \sigma \vdash N : \tau, I}^{\ea,\eo}_r \mid \eo?(c).P' \mid \du{r}?(x).Q'
\;\; \approx \;\;
\ea?(c).\du{\eo}!\langle{c}\rangle \mid \interp{N}_r \mid \eo?(c).P' \mid \du{r}?(x).Q' 
\end{align*}
From these the following weak bisimilarity holds: 
\begin{align*}
\begin{array}{rl}
& \nu q, \ea . (\interpA{\Gamma \vdash M : \sigma, I}^{\ei,\ea}_q \mid \overline{q}?(x). \interpA{\Gamma, x : \sigma \vdash N : \tau, I}^{\ea,\eo}_r) \mid \eo?(c).P \mid \du{r}?(x).Q \\
\stackrel{(B)}{\approx} & \nu q, \ea . (\interpA{\Gamma \vdash M : \sigma, I}^{\ei,\ea}_q \mid \overline{q}?(x). 
(\ea?(c).\du{\eo}!\langle{c}\rangle \mid \interp{N}_r \mid \eo?(c).P \mid \du{r}?(x).Q)\\
\stackrel{fwd}{\approx} & \nu q . (\interpA{\Gamma \vdash M : \sigma, I}^{\ei,\eo}_q \mid \overline{q}?(x). 
(\interp{N}_r \mid \eo?(c).P \mid \du{r}?(x).Q) \\
\stackrel{fwd}{\approx} & \nu q, \ea . (\interpA{\Gamma \vdash M : \sigma, I}^{\ei,\ea}_q \mid \ea?(c).\du{\eo}!\langle{c}\rangle \mid \overline{q}?(x). 
(\interp{N}_r \mid \eo?(c).P \mid \du{r}?(x).Q)) \\
\stackrel{(A)}{\approx} & \nu q, \ea . (
\ei?(c).\du{\ea}!\langle{c}\rangle \mid \interp{M}_q \mid \ea?(c).\du{\eo}!\langle{c}\rangle \mid \du{q}?(x).(\interp{N}_r \mid \eo?(c).P \mid \du{r}?(x).Q))\\ 
\xrightarrow{\ei(c)} & \nu q, \ea . (
\du{\ea}!\langle{c}\rangle \mid \interp{M}_q \mid \ea?(c).\du{\eo}!\langle{c}\rangle \mid \du{q}?(x).(\interp{N}_r \mid \eo?(c).P \mid \du{r}?(x).Q)) \\
\xrightarrow{\tau} & \nu q . (
\interp{M}_q \mid \du{\eo}!\langle{c}\rangle \mid \du{q}?(x).(\interp{N}_r \mid \eo?(c).P \mid \du{r}?(x).Q)) \\
\approx & \nu q . (\interp{M}_q \mid \du{q}?(x).(\interp{N}_r \mid P \mid \du{r}?(x).Q))\\
\equiv & \nu q . (\interp{M}_q \mid \du{q}?(x).\interp{N}_r) \mid P \mid \du{r}?(x).Q \\
\equiv & \interp{\synLet{M}{x}{N}}_r \mid P \mid \du{r}?(x).Q  \\[0.6em]
& \ei?(c).\du{\eo}!\langle{c}\rangle \mid \interp{\synLet{M}{x}{N}}_r \mid \eo?(c).P \mid \du{r}?(x).Q \\[-0.2em]
\xrightarrow{\ei(c)} & \du{\eo}!\langle{c}\rangle \mid \interp{\synLet{M}{x}{N}}_r \mid \eo?(c).P \mid \du{r}?(x).Q \\
\xrightarrow{\tau} & \interp{\synLet{M}{x}{N}}_r \mid P \mid \du{r}?(x).Q 
\end{array}
\end{align*}
\end{itemize}
\vspace{-1.6em}
\end{proof}
\end{document}